\newtheorem{theorem}{Theorem}
\newtheorem{lemma}{Lemma}
\theoremstyle{definition}
\newtheorem{definition}{Definition}
\newtheorem{remark}{Remark}
\newtheorem{assumption}{Assumption}
\newtheorem{problem}{Problem}
\title{\LARGE \bf
Design and Experimental Validation of Tube-based MPC for Timed-constrained Robot Planning}
\author{Alexandros Nikou, Shahab Heshmati-alamdari and Dimos V. Dimarogonas
\thanks{The authors are with the School of Electrical Engineering and Computer Science, KTH Royal Institute of Technology, Stockholm, Sweden. E-mail: {\tt\small \{anikou, shaha, dimos\}@kth.se}. This work was supported by the H2020 ERC Grant BUCOPHSYS, the EU H2020 Co4Robots project, the Swedish Foundation for Strategic Research (SSF), the Swedish Research Council (VR) and the Knut och Alice Wallenberg Foundation (KAW).}
}
\begin{document}
\maketitle
\thispagestyle{empty}
\pagestyle{empty}

\begin{abstract}
This paper deals with the design and experimental validation of a state-of-the art tube-based Model Predictive Control (MPC) for achieving time-constrained tasks. Given the uncertain nonlinear dynamics of the robot as well as a high-level task written in Metric Interval Temporal Logic (MITL), the goal is to design a feedback control law that guarantees the satisfaction of the task. The workspace is divided into Regions of Interest (RoI) and contains also unsafe regions (obstacles) that the robot should not visit. The feedback control law consists of two terms: a control input which is the outcome of a Finite Horizon Optimal Control (FHOCP); and a state feedback law that guarantees that the nominal trajectories are bounded within a tube centered along the nominal trajectories. The aforementioned control law guarantees that the robot is safely navigated through the RoI within certain time bounds. The proposed framework can handle the rich expressiveness of MITL and is experimentally tested with a Nexus mobile robot in our lab facilities. The experimental results show that the proposed framework is promising for solving real-life robotic as well as industrial problems.
\end{abstract}

\section{Introduction}

Over the last years the field of controlling systems under formal verification constraints has been gaining significant research attention due to important applications in robotics, autonomous driving and industrial automation.  The timed logic that has primarily been used is Metric Interval Temporal Logic \cite{alur_1994, alex_2016_acc, alex_licenciate,  alex_automatica_2018}. The control synthesis under Metric Interval Temporal Logic (MITL) consists of three parts:
\begin{enumerate}
\item first, the dynamics of the robot are abstracted into a Weighted Transition System (WTS) by providing feedback control laws that can drive the robot between states; the time duration that the robot needs to navigate between the states is modeled as a weight to the transition system.
\item a product between the WTS and an automaton which accepts all words that satisfy the given formula is computed.
\item once a run is found in the product automaton, it maps back into a sequence of feedback laws that satisfy the given formula.
\end{enumerate}

In practical applications, regarding the first part, the feedback control laws need to be appropriately designed such that the following specifications are taken into consideration: (a) state and input constraints; (b) obstacle avoidance; and (c) robustness against potential external disturbances and uncertainties. 

A control technique that has been used for navigation of robotic agents with guaranteeing obstacle avoidance is the potential fields approach \cite{KODITSCHEK1990412}. However, input constraints and potential external disturbances cannot be incorporated in the control design in a straightforward manner. In the same context, Prescribed Performance Controllers (PPC) that have been used for robotic navigation (see \cite{babis_automatica}) cannot handle input constraints as well as obstacle avoidance guarantees. Regarding real-time experiments, which is the main focus of the manuscript, both of the aforementioned methodologies require a significant amount of efforts in tuning the control gains. In particular, the potential fields approach requires computation of complicated formulas that consist of derivatives which might lead to numerical instabilities when applied to real platforms.

\begin{figure}[t!]
\centering
\includegraphics[scale = 0.47]{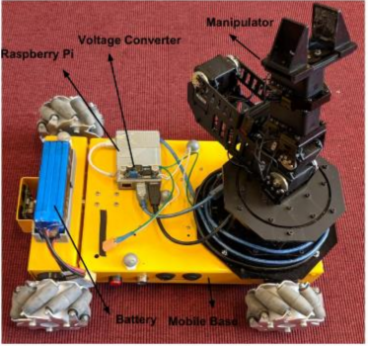}
\caption{A Nexus $10011$ mobile robot with an attached 3 DoF manipulator.}\label{fig:nexus}
\end{figure}

A feedback control law that has been recently proven to be efficient in incorporating the aforementioned specifications is the so-called tube-based Model Predictive Control (MPC) (see \cite{rakovic_2004_tubes_1, yu_2013_tube, alex_ACC_2018, alex_IJRNC_2018}). The idea of the tube-based MPC is based on the fact that the control law has two parts: a control law that is computed on-line and is the outcome of a FHOCP; and a state-feedback law designed offline and guarantees that the real system trajectories are always bounded in a tube whose volume depends on the bound of the disturbances as well as bounds of the derivatives of the dynamics. Experiments with MPC have been provided in \cite{shaha1, shaha2, shaha3, alex_shahab_ifac} without incorporating any tube guarantees.

The contribution of this paper is to experimentally validate our recent theoretical results \cite{alex_2016_acc, alex_ACC_2018, alex_IJRNC_2018} in order to solve a time-constrained planning problem for  a Nexus $10011$ mobile robot (see Fig. \ref{fig:nexus}). In particular, inside of the workspace there exist a set of Regions of Interest (RoI) some of which may be unsafe regions (obstacles). Given an MITL formula that the Nexus is desired to satisfy, the three steps control synthesis MITL framework is demonstrated in practice by application of a sequence of tube-based MPC laws that guarantees the satisfaction of the formula. The experimental results of the paper in hand verify the efficiency of the proposed framework which solves a general category of time-constrained robot navigation problems under state/input constraints, obstacle avoidance as well as uncertainties/disturbances. To the best of the authors' knowledge, this is the first time that such a general control synthesis framework in terms of dynamics, language expressiveness and state/input constraints is demonstrated. The experimental results show that the proposed framework is promising for solving real-life robotic as well as industrial problems.

\section{Notation and Preliminaries} \label{sec:notation_preliminaries}

The $n$-fold Cartesian product of a set $\mathcal{S}$ and its cardinality are defined by $\mathcal{S}^{n}$ and $|\mathcal{S}|$, respectively; $\|y\|_{\scriptscriptstyle 2} \coloneqq \sqrt{y^\top y}$ and $\|y\|_{\scriptscriptstyle M}$ $\coloneqq \sqrt{y^\top M y}$, $M $ $\ge 0$ stand for the Euclidean and the weighted norm of a vector $y \in \mathbb{R}^n$, respectively; $\lambda_{\scriptscriptstyle \min}(M)$ stands for the minimum absolute value of the real part of the eigenvalues of $M \in \mathbb{R}^{n \times n}$; $0_{m \times n} \in \mathbb{R}^{m \times n}$ and $I_n \in \mathbb{R}^{n \times n}$ stand for the $m \times n$ matrix with all entries zeros and the identity matrix, respectively; The set $\mathcal{M}(\chi, r) \coloneqq \{z \in \mathbb{R}^n: \|z-\chi\|_{2} \leq r\}$ models a ball with center and radius $\chi \in \mathbb{R}^{n}$, $r >0$, respectively. Given~the~sets~$\mathcal{S}_1$, $\mathcal{S}_2$~$\subseteq \mathbb{R}^n$~and~the~matrix $M \in \mathbb{R}^{n \times m}$,~ the \emph{matrix-set multiplication}, the \emph{Minkowski addition} and the~\emph{Pontryagin~difference} are respectively defined by: $M \circ \mathcal{S} \coloneqq \{m: \exists s \in \mathcal{S}, m = Ms\}$, $\mathcal{S}_1 \oplus \mathcal{S}_2 \coloneqq \{s_1 + s_2 : s_1 \in \mathcal{S}_1, s_2 \in \mathcal{S}_2\}$,~and~$\mathcal{S}_1 \ominus \mathcal{S}_2 \coloneqq \{s_1 : s_1+s_2 \in \mathcal{S}_1, \forall s_2 \in \mathcal{S}_2\}$.

\begin{definition} \label{def:RPI_set} \cite{yu_2013_tube}
Consider a dynamical system $\dot{x} = f(x)+g(x)u +\delta$ where: $x \in \mathcal{X}$, $u \in \mathcal{U}$ and $\delta \in \Delta$. Consider a set $\mathcal{Q} \subseteq \mathcal{X}$. If there exists a feedback control law $u \coloneqq \kappa(x) \in \mathcal{U}$, such that for all $x(0) \in \mathcal{Q}$ and for all $\delta(t) \in \Delta$ it holds that $x(t) \in \mathcal{Q}$ for all $t \ge 0$, along every solution $x(t)$, then $\mathcal{Q}$ is called a \emph{Robust Control Invariant (RCI) set} for the system.
\end{definition}

\begin{definition} \label{def: WTS} \cite{alex_2016_acc}
A \emph{Weighted Transition System} (WTS) is a tuple $(S, S_0, {\rm Act}, \longrightarrow, \mathfrak{t}, \Gamma, L)$ where $S$ is a set of states; $S_0 \subseteq S$ is a set of initial states; ${\rm Act}$ is a set of actions; $\longrightarrow \subseteq S \times {\rm Act} \times S$ is a transition relation; $\mathfrak{t}: \longrightarrow \rightarrow \mathbb{Q}_{+}$ is a function that maps a weight to each transition; $\Gamma$ is a set of atomic propositions; and $L: S \rightarrow 2^{\Gamma}$ stands for the labeling function.
\end{definition}

\begin{definition}\label{run_of_WTS} \cite{alex_2016_acc}
A \textit{timed run} of a WTS is an infinite sequence $r^t = (r(0), \tau(0))(r(1), \tau(1)) \ldots$, such that $r(0) \in S_0$, and for all $l \geq 0$, it holds that $r(l) \in S$ and $(r(l), u(l), r(l+1)) \in \longrightarrow$ for a sequence of actions $u(0) u(1) u(2) \ldots$ with $u(l) \in \rm Act$, $\forall l \geq 0$. The \textit{time stamps} $\tau(l)$, $l \geq 0$ are inductively defined as: $1)$ $\tau(0) = 0$; $2)$ $\displaystyle \tau(l+1) \coloneqq  \tau(l) + \mathfrak{t}(r(l), r(l+1))$, $\forall l \geq 0$.
\end{definition}

The Metric Interval Temporal Logic (MITL) \cite{alur_1994} over a set of atomic propositions $\Gamma$ is defined by the grammar:
\begin{equation*}
\varphi := \gamma \ | \ \neg \varphi \ | \ \varphi_1 \wedge \varphi_2 \ | \ \bigcirc_I \varphi  \ | \ \Diamond_I \varphi \mid \square_I \varphi \mid  \varphi_1 \ \mathcal{U}_I \ \varphi_2,
\end{equation*}
where $\gamma \in \Gamma$, and $\bigcirc$, $\Diamond$, $\square$ and $\mathcal{U}$ are the next, eventually, always and until temporal operator, respectively; $\neg$, $\wedge$ are the negation and conjunction operators, respectively; $I$ stands for a non-empty timed interval. For the semantics of MITL see \cite{alur_1994}. Technical details regarding timed verification can be found in \cite{bouyer2009qualitative}.

\section{Problem Formulation} \label{sec:problem_formulation}
\subsection{System Model}

Consider a robot operating in a bounded workspace $\mathcal{W} \subseteq \mathbb{R}^{n}$ with \emph{uncertain dynamics:}
\begin{equation} \label{eq:dynamics}
\dot{x} = f(x) + g(x) u + \delta,
\end{equation}
where $f: \mathbb{R}^{n} \to \mathbb{R}^{n}$, $g: \mathbb{R}^{n} \to \mathbb{R}^{n\times n}$ are known and continuously differentiable functions; $u \in \mathbb{R}^{n}$ stands for the control input; and $\delta \in \mathbb{R}^{n}$ models the external disturbances and uncertainties. Consider also input constraints as well as bounded disturbances:
\begin{align*}
u \in \mathcal{U} & \coloneqq \{u \in \mathbb{R}^{n} : \|u\|_{2} \le \widetilde{u}\}, \\
\delta \in \Delta & \coloneqq \{\delta \in \mathbb{R}^{n} : \|\delta\|_{2} \le \widetilde{\delta}\},
\end{align*}
where $\widetilde{u}$, $\widetilde{\delta} > 0$ are a priori known. Define the corresponding \emph{nominal dynamics} for \eqref{eq:dynamics} by:
\begin{equation*}
\dot{\hat{x}} = f(\hat{x}) + g(\hat{x}) \hat{u},
\end{equation*}
which are the dynamics for the case of $\delta = 0$.

\begin{assumption} \label{ass1}
Let the dynamics:
$$\dot{\hat{x}} = A \hat{x} + B \hat{u}.$$
be the Jacobian linearization of the nominal dynamics \eqref{eq:dynamics} around the equilibrium state $x = 0$. Then, we assume that the latter system is stabilizable. 
\end{assumption}

\begin{assumption} \label{ass2}
There exists a strictly positive constant $\underline{g}$ such that the following holds:
\begin{align*}
\hspace{-3mm} \lambda_{\min}\left[\tfrac{g(x) + g(x)^\top}{2}\right] \ge \underline{g} > 0, \ \ \forall x \in \mathcal{W}. \hspace{-3mm}
\end{align*}
Furthermore, it holds that $f(0) = 0$ and $g(0) = 0$.
\end{assumption}

In the given workspace, there exist $z \in \mathbb{N}$ Regions of Interest (RoI) labeled by $\mathcal{Z} \coloneqq \{1, \dots, z\}$. The RoI are modeled by the balls $\mathcal{R}_z  \coloneqq \mathcal{M}(y_z, p_z)$, $z \in \mathcal{Z}$, where $y_z$ and $p_z > 0$ stands for the center and radius of RoI $\mathcal{R}_z$, respectively. At each time $t \ge 0$, the robot is occupying a ball $\mathcal{M}(x(t), \eta)$ that covers its volume, where $x(t)$ and $\eta > 0$ are its center and radius, respectively.

\subsection{Objectives}

The main goal of this paper is to design a feedback control law that steers the robot with dynamics as in \eqref{eq:dynamics} between RoI so that it obeys a high-level task given in MITL. Define the labeling function:
\begin{align} \label{eq:label_function}
L: \bigcup_{z \in \mathcal{Z}} \mathcal{R}_z \to 2^{\Gamma},
\end{align}
which maps each RoI with a subset of atomic propositions.

\begin{figure*}
	\centering
	\begin{tikzpicture}[scale = 0.75] 
	\draw[blue!70, line width=.04cm] (-16.8, 6.0) rectangle +(2.7, 0.9);
	\node at (-15.45, 6.45) {$\text{MITL2TBA}$};
	
	\draw[-latex, draw=black, line width = 1.0] (-15.5,6.0) -- (-15.5,4.6);
	\draw[-latex, draw=black, line width = 1.0] (-15.5,7.6) -- (-15.5,6.9);
	
	\node at (-15.5, 8.00) {$\varphi$};
	\node at (-15.0, 5.35) {$\mathcal{A}$};
	\node at (-15.5, 4.45) {$\otimes$};
	
	
	\draw[-latex, draw=black, line width = 1.0] (-15.20,4.45) -- (-14.0,4.45);
	
	\node at (-13.6, 4.47) {$\widetilde{\mathcal{T}}$};
	
	\draw[-latex, draw=black, line width = 1.0] (-16.60,4.45) -- (-15.7,4.45);
	\node at (-17.2, 4.47) {$\mathcal{T}$};
	
	\draw[-latex, draw=black, line width = 1.0] (-13.20,4.45) -- (-12.5,4.45);
	\draw[red!70, line width=.04cm] (-12.5, 4.15) rectangle +(2.20, 0.70);
	\node at (-11.32, 4.50) {$\text{synthesis}$};
	
	\draw[-latex, draw=black, line width = 1.0] (-10.25,4.45) -- (-9.75,4.45);
	
	\node at (-9.35, 4.47) {$\widetilde{r}$};
	
	\draw[-latex, draw=black, line width = 1.0] (-18.45,4.47) -- (-17.90,4.47);
	\draw[orange!70, line width=.04cm] (-21.0, 4.20) rectangle +(2.5, 0.7);
	\node at (-19.70, 4.60) {$\text{abstraction}$};
	
	\draw[-latex, draw=black, line width = 1.0] (-21.70,4.47) -- (-21.00,4.47);
	
	\node at (-24.0, 4.50) {$\displaystyle \dot{x}= f(x)+g(x)u +\delta$};
	
	\draw [black, line width = 0.030cm] (-9.35, 4.80) -- (-9.35, 8.50);
	\draw [black, line width = 0.030cm] (-9.35, 8.50) -- (-18.90, 8.50);
	\draw [black, line width = 0.030cm] (-20.20, 8.50) -- (-24.00, 8.50);
	\draw[-latex, draw=black, line width = 1.0] (-24.00, 8.50) -- (-24.00, 5.1);
	
	\draw[green!70, line width=.04cm] (-20.2, 7.95) rectangle +(1.3, 1.0);
	\node at (-19.5, 8.40) {$u$};
	\end{tikzpicture}
	\caption{A graphical representation of control design framework.}
	\label{fig:solution_scheme}
\end{figure*}
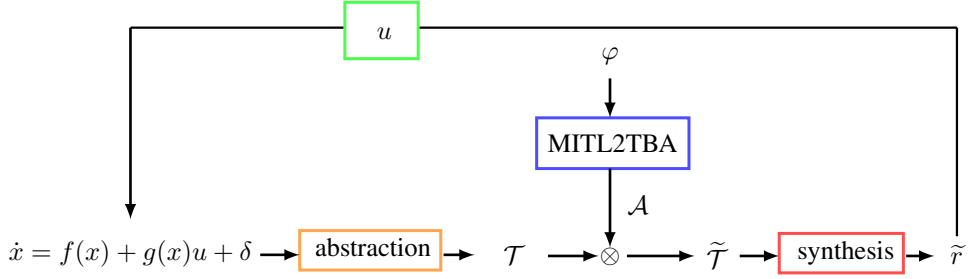

\begin{definition} \label{def:unique_timed_word}
	A trajectory $x(t)$ is \emph{uniquely associated with a timed run} $r^t = (r(0), \tau(0))$ $(r(1), \tau(1))\ldots$, if:	
	\begin{enumerate}
		\item $\tau(0) = 0$, i.e., the robot starts its motion at time $t = 0$;
		\item $r(l) \in \displaystyle \bigcup_{z \in \mathcal{Z}} \mathcal{R}_z$, for every $l \in \mathbb{N}$;
		\item $\mathcal{M}(x(\tau(0)), \eta) \subsetneq r(0)$, i.e., initially, the volume of the robot is entirely within the RoI $r(0)$;
		\item $\mathcal{M}(x(\tau(l)), \eta) \subsetneq r(l)$, $\forall l \in \mathbb{N}$, i.e., the robot changes discrete state only when its entire volume is contained in the corresponding RoI;
		\item $\tau(l+1) \coloneqq \tau(l) + \mathfrak{t}(r(l), r(l+1))$, $\forall l \in \mathbb{N}$, where:
		\begin{equation} \label{eq:desired_times_T}
		\mathfrak{t}: \left( \bigcup_{z \in \mathcal{Z}} \mathcal{R}_z \right) \times \left( \bigcup_{z \in \mathcal{Z}} \mathcal{R}_z \right) \to \mathbb{Q}_{+},
		\end{equation}
		is a function that models the time duration that the robot needs to be steered between regions $r(l)$ and $r(l+1)$.
	\end{enumerate}
\end{definition}  

\begin{definition} \label{def:x_satisfaction}
A trajectory $x(t)$ \emph{satisfies} a formula $\varphi$ written in MITL over a set of atomic propositions $\Gamma$ (written as $x(t) \models \varphi$, $\forall t \ge 0$), if and only if there exists a timed run $r^t$ to which the trajectory is uniquely associated, according to Definition \ref{def:unique_timed_word}, which satisfies $\varphi$.
\end{definition}

\begin{problem} \label{problem}
Consider a robot with dynamics as in \eqref{eq:dynamics}, operating in a workspace $\mathcal{W} \subseteq \mathbb{R}^{n}$. In the workspace, there exist $z \in \mathbb{N}$ RoI modeled by the balls $\mathcal{M}(y_z, p_z)$, $\forall z \in \mathcal{Z}$. Then, given an MITL formula $\varphi$ over a set of atomic propositions $\Gamma$ and a labeling function as in \eqref{eq:label_function}, design a feedback control law $u = \kappa(x) \in \mathcal{U}$ which guarantees that:
\begin{align*}
x(t) \models \varphi, \forall t \ge 0,
\end{align*}
according to Definition \ref{def:x_satisfaction}, while the robot remains in the workspace for all times.
\end{problem}

\section{Problem Solution} \label{sec:problem_solution}

\subsection{Feedback Control Design}

Consider a robot with dynamics \eqref{eq:dynamics} occupying a RoI $\mathcal{R}_i$, $i \in \mathcal{Z}$ at time $\mathfrak{t}_i \ge 0$. Denote by $x_j \in \mathcal{R}_j$, $j \in \mathcal{Z} \backslash \{i\}$ the center of a desired RoI towards which the robot is required to be navigated. Define the error vector $e \coloneqq x - x_i \in \mathbb{R}^n$. Then, the \emph{uncertain error dynamics} are given by:
\begin{align} \label{eq:uncrt_error_dynamics}
\dot{e} = f(e+x_i)+g(e+x_i) u + \delta.
\end{align}
The corresponding \emph{nominal error dynamics} are given by:
\begin{align} \label{eq:nominal_error_dynamics}
\dot{\hat{e}} = f(\hat{e}+x_i)+g(\hat{e}+x_i) \hat{u}
\end{align}
Define the set that captures the state constraints by:
\begin{align*}
\mathcal{X} \coloneqq \{x \in \mathbb{R}^n : & \ \mathcal{M}(x(t), \eta) \cap \mathcal{R}_{j'} = \emptyset, \ \ \forall j' \in \mathcal{Z} \backslash \{i, j\}, \\ 
&\hspace{1mm} \mathcal{M}(x(t), \eta) \subsetneq \mathcal{W} \}.
\end{align*}
The first constraint denotes the fact that the robot should not intersect with any other RoI other than $\mathcal{R}_i$ ,$\mathcal{R}_j$; the second one, denotes the fact that the robot needs to remain in the workspace for all times.

Define by $q = e - \hat{e}$ the deviation between the real state of the system \eqref{eq:uncrt_error_dynamics} and the nominal state of the system \eqref{eq:nominal_error_dynamics} with $q(0) = e(0) -\hat{e}(0) = 0$. The dynamics of the state $q$ are given by:
\begin{align}
\dot{q} & =  f(e+x_i)- f(\hat{e}+x_i) \notag \notag \\
&\hspace{19mm} +g(e+x_i) u-g(\hat{e}+x_i) \hat{u} + \delta \notag \\
& = h(e, \hat{e}, u) + g(\hat{e}+x_i)(u-\hat{u})+\delta. \label{eq:q_dynamics}
\end{align}
where the function $h$ is defined by $h(e, \hat{e}, u) \coloneqq f(e+x_i)- f(\hat{e}+x_i) + g(e+x_i)u-g(\hat{e}+x_i)u$. Note that the following holds:
\begin{align*}
\|h(e, \hat{e}, u)\|_2 \le L \|q\|_2, 
\end{align*}
where $L \coloneqq \max\{L_f, L_g\}$ and $L_f$, $L_g > 0$ are the Lipschitz constant of the functions $f$, $g$, respectively.

\begin{lemma}
The feedback control law:
\begin{align} \label{eq:control_input_u}
u \coloneqq \hat{u}(\hat{e}) - \sigma q,
\end{align}
where $\hat{u}$ is a nominal input to be computed afterwards through an on-line optimal control problem and the control gain is designed such that $\sigma = \frac{L}{\underline{g}} + \underline{\sigma}$, $\underline{\sigma} > 0$, renders the set:
\begin{align*}
\mathcal{Q} \coloneqq \left\{q \in \mathbb{R}^n : \|q\|_2 \le \frac{\widetilde{\delta}}{\underline{\sigma}} \right\},
\end{align*}
an RPI set for the system \eqref{eq:q_dynamics}, according to Definition \ref{def:RPI_set}.
\end{lemma}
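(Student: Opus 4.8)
The plan is to establish robust control invariance of $\mathcal{Q}$ via a Lyapunov sublevel-set argument. I would take the smooth candidate $V(q) \coloneqq \tfrac{1}{2}\|q\|_2^2 = \tfrac{1}{2}q^\top q$, whose sublevel set $\{q : V(q) \le \tfrac{1}{2}(\widetilde{\delta}/\underline{\sigma})^2\}$ coincides exactly with $\mathcal{Q}$. By Definition \ref{def:RPI_set} it then suffices to show that the closed-loop vector field never points strictly outward on the boundary $\partial\mathcal{Q}$, i.e. that $\dot V \le 0$ whenever $\|q\|_2 = \widetilde{\delta}/\underline{\sigma}$, uniformly over all admissible $\hat{e}+x_i \in \mathcal{W}$ and all $\delta \in \Delta$. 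A standard invariance (comparison) argument then guarantees that trajectories initialized in $\mathcal{Q}$ — in particular the tube, which starts at $q(0)=0\in\mathcal{Q}$ — remain in $\mathcal{Q}$ for all $t\ge 0$.

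First I would substitute the control law into the deviation dynamics. Since $u-\hat u = -\sigma q$ by \eqref{eq:control_input_u}, equation \eqref{eq:q_dynamics} reads $\dot q = h(e,\hat e,u) - \sigma\, g(\hat e+x_i)\, q + \delta$. Differentiating $V$ along solutions and applying Cauchy--Schwarz produces three terms to bound:
\begin{align*}
\dot V = q^\top \dot q = q^\top h(e,\hat e,u) \;-\; \sigma\, q^\top g(\hat e+x_i)\, q \;+\; q^\top \delta .
\end{align*}
I would bound each separately: (i) the first term by the Lipschitz estimate recorded just before the statement, $q^\top h \le \|q\|_2\|h\|_2 \le L\|q\|_2^2$; (ii) the disturbance term by $q^\top\delta \le \|q\|_2\|\delta\|_2 \le \widetilde{\delta}\|q\|_2$; and (iii), the crucial one, the quadratic term. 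Because $g$ need not be symmetric, I would rewrite $q^\top g(\hat e+x_i) q = q^\top \tfrac{g+g^\top}{2}\, q$ and invoke Assumption \ref{ass2}, whose lower bound $\underline{g}$ on the minimum eigenvalue of the symmetric part yields $q^\top g(\hat e+x_i)q \ge \underline{g}\|q\|_2^2$, hence $-\sigma\, q^\top g\, q \le -\sigma\underline{g}\|q\|_2^2$.

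Combining the three estimates gives $\dot V \le (L-\sigma\underline{g})\|q\|_2^2 + \widetilde{\delta}\|q\|_2$. Substituting the designed gain $\sigma = L/\underline{g}+\underline{\sigma}$ collapses the quadratic coefficient, $L-\sigma\underline{g} = -\underline{\sigma}\underline{g}$, so that $\dot V \le \|q\|_2\big(\widetilde{\delta}-\underline{\sigma}\underline{g}\|q\|_2\big)$, which is nonpositive for $\|q\|_2$ on and beyond the boundary, closing the invariance argument. The main obstacle I anticipate is term (iii): the asymmetry of $g$ forces the symmetric-part reduction, and one must confirm that the uniform bound $\underline{g}$ of Assumption \ref{ass2} is legitimately applied at the argument $\hat e+x_i$, which presupposes the nominal trajectory stays inside $\mathcal{W}$; the remaining steps are routine norm bounding. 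A secondary point worth checking is the bookkeeping of $\underline{g}$ in the final radius: the boundary estimate naturally yields the threshold $\widetilde{\delta}/(\underline{\sigma}\underline{g})$, so one should verify its consistency with the stated radius $\widetilde{\delta}/\underline{\sigma}$ (e.g. via the precise placement of $\underline{g}$ in the gain).
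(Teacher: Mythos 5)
Your proof is correct and follows essentially the same route as the paper's: the identical Lyapunov candidate $\Lambda(q)=\tfrac{1}{2}\|q\|_2^2$, the same three-term split of $q^\top\dot q$ bounded via the Lipschitz estimate, the symmetric-part eigenvalue bound of Assumption \ref{ass2}, and Cauchy--Schwarz on the disturbance, yielding $\dot\Lambda \le \left(-(\sigma\underline{g}-L)\|q\|_2+\widetilde{\delta}\right)\|q\|_2$ followed by the comparison/invariance argument from $q(0)=0$. The bookkeeping issue you flag is in fact a genuine slip in the paper's own proof: with $\sigma=\tfrac{L}{\underline{g}}+\underline{\sigma}$ one gets $\sigma\underline{g}-L=\underline{\sigma}\,\underline{g}$, so the estimate gives the radius $\widetilde{\delta}/(\underline{\sigma}\,\underline{g})$ rather than the stated $\widetilde{\delta}/\underline{\sigma}$ (the paper silently drops the factor $\underline{g}$, which is valid only when $\underline{g}\ge 1$; the clean fix is the gain choice $\sigma=(L+\underline{\sigma})/\underline{g}$, exactly as you suggest).
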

\begin{proof}
The proof of this lemma follows similar arguments to \cite{alex_ACC_2018, alex_IJRNC_2018}. The time derivative of the function $\Lambda(q) = \frac{1}{2} \|q\|^2$ along the trajectories of the system \eqref{eq:q_dynamics} is:
\begin{align*}
\dot{\Lambda}(q) & = q^\top \dot{q} \\
& = q^\top h(e, \hat{e}, u) + q^\top g(e+x_i) (u-\hat{u}) + q^\top \delta \\
& \le L \|q\|_2^2 - \sigma q^\top g(e+x_i) q + \widetilde{\delta} \|q\|_2 \\
& \le  \left[-\left(\sigma \underline{g} - L \right) \|q\|_2 +\widetilde{\delta} \right] \|q\|_2 \notag \\
& \le  \left(-\underline{\sigma} \|q\|_2 +\widetilde{\delta} \right) \|q\|_2.
\end{align*}
Thus, $\dot{\Lambda} < 0$ when $\|q\|_2 > \frac{\widetilde{\delta}}{\underline{\sigma}}$. Since the fact that $q(0) = 0$, it holds that $\|q(t)\| \le \frac{\widetilde{\delta}}{\underline{\sigma}}$, for every $t \ge 0$.
\end{proof}

\begin{remark}
The volume of the tube that is centered along the nominal trajectories $\hat{e}(t)$, $t \ge 0$ depends on the upper bound of the disturbance $\widetilde{\delta}$, and the parameters $L$, $\underline{g}$.
\end{remark}

Hereafter we introduce the methodology under which the online control law $\hat{u}(\hat{e})$ is calculated. Denote by $h > 0$ and $N > h$ the sampling step and the finite prediction horizon. Consider a sequence of sampling times ${t_k}$, $k \in \mathbb{N}$. Then, at every sampling time $t_k$, $k \in \mathbb{N}$ the following FHOCP is solved:
\begin{subequations}
\begin{align}
&\hspace{-4mm}\min\limits_{\hat{u}(\cdot)} \left\{  \|\hat{e}(t_k+N)\|^2_{\scriptscriptstyle P} \hspace{-1mm} + \hspace{-2mm}\int_{t_k}^{t_k+N} \hspace{-1mm}\Big[ \|\hat{e}(s)\|^2_{\scriptscriptstyle Q} +\|\hat{u}(s)\|^2_{\scriptscriptstyle R} \Big] ds \right\} \hspace{0mm} \label{eq:mpc_cost_function} \hspace{-7mm}\\
&\hspace{-4mm}\text{subject to:} \notag \\
&\hspace{-3mm} \dot{\hat{e}}(s) = f(\hat{e}(s)+x_i) +g(\hat{e}(s)+x_i)\hat{u}(s), \ \ \hat{e}(t_k) = e(t_k), \label{eq:diff_mpc} \\
&\hspace{-3mm} \hat{e}(s) \in \mathcal{E} , \ \ \hat{u}(s) \in \mathbb{U},  \ \ s \in [t_k,t_k+T], \label{eq:mpc_constrained_set} \\
&\hspace{-3mm} \hat{e}(t_k+N)\in \mathcal{F}, \label{eq:mpc_terminal_set}
\end{align}
\end{subequations}
where $Q$, $P$ and $R$ are positive definite gain matrices to be appropriately tuned. The state and input constraints sets are modified as:
\begin{align*} 
\mathcal{E} \coloneqq \left[\mathcal{X} \oplus (-x_i)\right] \ominus \mathcal{Q},\ \ \mathbb{U} \coloneqq \mathcal{U} \ominus \left[-\sigma \circ \mathcal{Q} \right],
\end{align*}
in order to guarantee that while the FHOCP is solved for the nominal system dynamics \eqref{eq:nominal_error_dynamics}, the real trajectories satisfy the state and input constraints $\mathcal{X} \oplus (-x_i)$, $\mathcal{U}$, respectively. The set $\mathcal{F} \coloneqq \{\hat{e} \in \mathcal{E} : \|\hat{e}\|_{P} \le \varepsilon\}$, $\varepsilon > 0$ is used in order to ensure the nominal stability of the system \cite{frank_1998_quasi_infinite}.

\begin{theorem} (\cite{alex_ACC_2018})
Assume that Assumptions \ref{ass1} and \ref{ass2} hold. Consider a time $\mathfrak{t}_i \ge 0$ that the robot occupies a RoI $\mathcal{R}_i$, $i \in \mathcal{Z}$ with center $x_i$. Let $x_j$, $j \neq i$ be the center of a desired RoI $\mathcal{R}_j$ that the robot is required to be navigated towards. Suppose also that the FHOCP \eqref{eq:mpc_cost_function}-\eqref{eq:mpc_terminal_set} is feasible at time $\mathfrak{t}_i$. Then, there exist a time $\mathfrak{t}_{ij} > \mathfrak{t}_i$ such that:
\begin{align*}
\|x_i(t)-x_j\| \le \frac{\varepsilon}{\sqrt{\lambda_{\max}(P)}}+\frac{\widetilde{\delta}}{\underline{\sigma}}, \ \ \forall t \ge \mathfrak{t}_{ij}.
\end{align*}
\end{theorem}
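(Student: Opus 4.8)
The plan is to split the distance to the target into a robustness (tube) part and an optimality (nominal) part and bound each separately. Writing $q(t) \coloneqq x(t) - \hat{x}(t) = e(t) - \hat{e}(t)$ for the deviation between the true and nominal closed-loop state, and letting $\hat{e}(t)$ be the nominal error that the stage and terminal costs penalize, measured relative to the target center $x_j$ (so that $\hat{e} = 0$ corresponds to $\hat{x} = x_j$), the triangle inequality yields
\begin{align*}
\|x(t) - x_j\| \le \|x(t) - \hat{x}(t)\| + \|\hat{x}(t) - x_j\| = \|q(t)\|_2 + \|\hat{e}(t)\|_2 .
\end{align*}
First I would dispose of the tube term. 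Since constraint \eqref{eq:diff_mpc} re-initialises $\hat{e}(t_k) = e(t_k)$ at the start of the segment, $q(\mathfrak{t}_i) = 0 \in \mathcal{Q}$, and the preceding Lemma shows that under the feedback \eqref{eq:control_input_u} the set $\mathcal{Q}$ is RPI for \eqref{eq:q_dynamics}; hence $\|q(t)\|_2 \le \widetilde{\delta}/\underline{\sigma}$ for every $t \ge \mathfrak{t}_i$ and every admissible disturbance realisation. This settles the second summand of the claimed bound once and for all.

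The core of the argument is to show that the nominal error vanishes, $\hat{e}(t) \to 0$ (equivalently $\hat{x}(t) \to x_j$), under the receding-horizon implementation of the FHOCP \eqref{eq:mpc_cost_function}--\eqref{eq:mpc_terminal_set}. I would follow the quasi-infinite-horizon route of \cite{frank_1998_quasi_infinite, alex_ACC_2018}. The terminal ingredients --- the quadratic penalty $\|\cdot\|_P^2$ and the terminal set $\mathcal{F} = \{\|\hat{e}\|_P \le \varepsilon\}$ --- are built from the Jacobian linearisation, stabilizable by Assumption \ref{ass1}, together with the structural properties $f(0)=0$, $g(0)=0$ and the lower bound on the symmetric part of $g$ from Assumption \ref{ass2}; these furnish a local terminal controller under which $\mathcal{F}$ is control invariant, stays inside the tightened set $\mathcal{E}$, and the terminal penalty is a local control-Lyapunov function obeying the standard decrease inequality on $\mathcal{F}$. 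Using feasibility at $\mathfrak{t}_i$ as the induction base, I would prove recursive feasibility by the usual concatenation (shift the previous optimiser, append the terminal controller on $\mathcal{F}$), and then show that the optimal value function is nonincreasing along the sampled closed loop with decrement dominated by the stage cost $\|\hat{e}\|_Q^2$. Boundedness and monotonicity of the value function, together with Barbalat's lemma, then deliver $\hat{e}(t) \to 0$.

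It remains to assemble the estimate, and here lies the point that fixes the constant in the statement. Because $\hat{e}(t) \to 0$ and $\varepsilon/\sqrt{\lambda_{\max}(P)}$ is a fixed strictly positive number, there exists a time $\mathfrak{t}_{ij} > \mathfrak{t}_i$ such that $\|\hat{e}(t)\|_2 \le \varepsilon/\sqrt{\lambda_{\max}(P)}$ for all $t \ge \mathfrak{t}_{ij}$. It is precisely the genuine asymptotic convergence of the nominal state to the target --- rather than mere eventual membership in $\mathcal{F}$, which would only give the coarser $\|\hat{e}\|_2 \le \varepsilon/\sqrt{\lambda_{\min}(P)}$ --- that licenses the sharper $\lambda_{\max}(P)$. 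Substituting this bound together with $\|q(t)\|_2 \le \widetilde{\delta}/\underline{\sigma}$ into the triangle inequality gives $\|x(t) - x_j\| \le \varepsilon/\sqrt{\lambda_{\max}(P)} + \widetilde{\delta}/\underline{\sigma}$ for all $t \ge \mathfrak{t}_{ij}$, as claimed.

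The step I expect to be the main obstacle is the second paragraph: verifying the terminal ingredients and the recursive-feasibility/value-decrease machinery for the continuous-time, sampled, genuinely nonlinear error dynamics \eqref{eq:nominal_error_dynamics}, and establishing true asymptotic convergence to the origin (not just entry into $\mathcal{F}$), since only the former yields the stated constant. Reconciling the linearisation taken around $x=0$ in Assumption \ref{ass1} with regulation toward a target $x_j \neq 0$, and keeping the terminal controller admissible with respect to the tightened constraints $\mathcal{E}$, $\mathbb{U}$ on $\mathcal{F}$, are the delicate points; by contrast the tube bound and the closing triangle inequality are routine once $\hat{e}(t) \to 0$ is in hand.
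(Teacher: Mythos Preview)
The paper does not actually prove this theorem: it is stated with a citation to \cite{alex_ACC_2018} and no argument is supplied in the present manuscript, so there is no in-paper proof to compare your proposal against. Your outline follows the standard tube-MPC decomposition --- the RPI bound on $q$ from the preceding Lemma, plus nominal asymptotic stability of the receding-horizon scheme via the quasi-infinite-horizon machinery of \cite{frank_1998_quasi_infinite} --- which is exactly the route one expects in the cited reference, and the triangle-inequality assembly is correct.

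Two of your side observations are worth recording. First, your point about $\lambda_{\max}(P)$ versus $\lambda_{\min}(P)$ is well taken: mere entry into $\mathcal{F}=\{\|\hat e\|_P\le\varepsilon\}$ only yields $\|\hat e\|_2\le \varepsilon/\sqrt{\lambda_{\min}(P)}$, and indeed Algorithm~1 in this paper uses the $\lambda_{\min}$ threshold, so the manuscript is internally inconsistent on this constant; the sharper $\lambda_{\max}$ bound in the theorem genuinely requires the asymptotic convergence $\hat e(t)\to 0$ that you invoke. Second, you are right to flag that the paper defines $e\coloneqq x-x_i$ (start) while the cost and terminal set only make sense if the error is measured from the target $x_j$, and that Assumption~1 linearises about $x=0$ rather than about the regulation point; these are infelicities in the paper's exposition, not defects in your argument, and your reinterpretation ($\hat e=0\Leftrightarrow \hat x=x_j$) is the only one under which the FHOCP and the theorem are coherent.
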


According to the previous theorem, the robot with dynamics as in \eqref{eq:dynamics}, starting its motion at time $\mathfrak{t}_i$ at RoI $\mathcal{R}_i$, under the control law \eqref{eq:control_input_u}, will have been navigated to RoI $\mathcal{R}_j$ at time $\mathfrak{t}_{ij}$. For the computation of the time $\mathfrak{t}_{ij}$, Algorithm $1$ is used. In particular, the time that the robot is reaching the terminal set is captured by $\mathfrak{t}_{ij}$. This time models the time the the robot will have been navigated from RoI $\mathcal{R}_i$ to $\mathcal{R}_j$, i.e, is consistent with \eqref{eq:desired_times_T}.

\begin{algorithm}[t!]
	\caption{Numerical computation of $\mathfrak{t}_{\scriptscriptstyle \rm ij} \coloneqq \mathfrak{t}(\mathcal{R}_{\scriptscriptstyle \rm i}, \mathcal{R}_{\scriptscriptstyle \rm j})$}
	\begin{algorithmic}[1]
		\STATE \textbf{Input}: $\mathfrak{t}_{\scriptscriptstyle \rm i}$, $\hat{x}(t_k)$, $x_j$, $k = \{0, 1, 2, \dots\}$;
		\STATE \textbf{Output}:  $\mathfrak{t}_{\scriptscriptstyle \rm ij}$;
		\STATE $t_k \leftarrow \mathfrak{t}_{\scriptscriptstyle \rm i}$;
		\STATE $\rm cond \leftarrow \rm True$
		\WHILE {$\rm cond = \rm True$}
		\STATE measure $\hat{x}(t_k)$;
		\IF {$\|\hat{x}(t_k)-x_{j}\|_2 \le \tfrac{\varepsilon}{\sqrt{\lambda_{\scriptscriptstyle \min}(P)}}$} 
		\STATE $\rm cond \leftarrow \rm False$; 
		\STATE $\rm break$;
		\STATE \textbf{Go to} ``line $13$"  
		\ENDIF
		\STATE $t_k \leftarrow t_k + h$; 
		\ENDWHILE		
		\STATE $\mathfrak{t}_{\scriptscriptstyle \rm ij} \leftarrow t_k$;
	\end{algorithmic} 
\end{algorithm}

\subsection{Discrete Abstractions and Control Synthesis}

\begin{definition} \label{def:WTS_abstraction}
The motion of the robot in the workspace is modeled through a WTS $\mathcal{T} = (S, S_0, {\rm Act}, \longrightarrow, \mathfrak{t}, \Gamma, L)$ where:
\begin{itemize}
\item $S = \bigcup_{z \in \mathcal{Z}} \mathcal{R}_z$ is the set of states;
\item $S_0$ is the initial RoI that the robot starts its motion;
\item $\rm Act$ is set of actions containing the union of all feedback control laws of the form \eqref{eq:control_input_u} that are able to drive the robot between RoI;
\item $\longrightarrow \subseteq S \times Act \times S$ is the transition relation. We say that $(\mathcal{R}_i, u', \mathcal{R}_j) \in \longrightarrow$, $i$, $j \in \mathcal{Z}$, $i \neq j$, if there exist a feedback control $u' \in {\rm Act}$ that can drive the robot from RoI $\mathcal{R}_i$ to the RoI $\mathcal{R}_j$;
\item and $L$, $\mathfrak{t}$ as given in \eqref{eq:label_function} and \eqref{eq:desired_times_T}, respectively.
\end{itemize}
\end{definition}

By using the framework depicted in Fig. \ref{fig:solution_scheme}, a sequence of feedback laws as in \eqref{eq:control_input_u} that guarantee the satisfaction of the given MITL formula $\varphi$ is designed. In particular, the dynamics of the robot \eqref{eq:dynamics} are abstracted into a WTS $\mathcal{T}$ according to Definition \ref{def:WTS_abstraction}; A product WTS $\widetilde{T}$ is constructed by computing the product between $\mathcal{T}$ and the Timed B\"uchi automaton $\mathcal{A}$, whose accepting runs are the ones that satisfy the formula $\varphi$. Then, by performing graph search in $\widetilde{T}$ a timed run $\widetilde{r}$ that maps into feedback control laws as in \eqref{eq:control_input_u} can be computed. For the definitions of the product WTS and the Timed B\"uchi Automaton see \cite{alex_2016_acc}.

\section{Experimental Setup and Results} \label{sec:exp_results}

This section demonstrates the efficacy of the proposed framework via a real-time experiment employing a Nexus $10011$ mobile robot (Fig. \ref{fig:nexus}). The experiment was carried out at Smart Mobility Lab (SML) (see Fig. \ref{fig:nexus_lab} and \cite{SML}). The robot dimensions are $400 \times 360 \times 100$ mm and consists of $4$ aluminum mecanuum wheels which provide omni-directional capabilities through the $3$ degrees of freedom (moving forward/backward, left/right and rotation). The rollers have a rotation of $45$ degrees with reference to the the plane of the wheel. Each wheel is connected to  $12 V$ motors which they have optical encoders. The speed can be controlled by local PID controllers on an Arduino $328$ Controller and Arduino IO expansion board.

SML provides a motion capture system (MoCap) with $12$ Qualisys cameras spread across the lab. The MoCap provides the robot state vector, including pose, orientation as well as linear and angular velocities at frequency of $100$Hz. The hardware used in experiments are connected using Gigabit Ethernet connections, USB connections, and wireless 5.8Ghz connections. The Raspberry Pi on the robot is connected with a TP-Link Router via wireless $5.8$ Ghz connections. The host computer as well as the MoCap are connected with the TP-Link Router via Gigabit Ethernet connections. The host computer runs the node of the controller, which receives the measurement from MoCap, and  calculates the control signal. 

The software implementation of the proposed control strategy was conducted in C++ under the Robot Operating System (ROS) \cite{ROS}. Moreover, the Nonlinear Model Predictive Controller employed in this work is designed using the NLopt Optimization library \cite{MLopt}  and runs on a desktop with $8$ cores, $3.40$GHz Intel Core TM $i7-6700$ CPU and $32$GB of RAM.

The workspace that the robot can operate in as well as a panoramic view of it are depicted in Fig. \ref{fig:nexus_lab} and Fig. \ref{fig:workspace}, respectively. It is captured by the set $\mathcal{W} = \{x \in \mathbb{R}^2 : \|x\|_{\infty} \le 2.5 \}$ and contains $9$ RoI which are divided as follows:
\begin{itemize}
\item the RoI $\mathcal{R}_i$, $i \in \{1, 2, 3, 4, 5\}$ depicted with blue in Fig. \ref{fig:workspace}; The RoI $\mathcal{R}_1$ and $\mathcal{R}_3$ map into the atomic propositions $\rm mission_1$ and $\rm mission_2$, respectively.
\item  the RoI $\mathcal{R}_i$, $i \in \{6, 7, 8, 9\}$ depicted with red in Fig. \ref{fig:workspace} which stand for unsafe regions that the robot needs to avoid. They map into the atomic propositions $\rm obs_1$, $\rm obs_2$ and $\rm obs_3$, respectively. Moreover, it holds that $L(\mathcal{R}_i) = \emptyset$, $\forall i \in \{2, 4, 5\}$.
\end{itemize}
The control input constraints are set to: $$\mathcal{U} = \{u_i \in \mathbb{R}^{3} : |u_i| \le 0.2, \ i \in \{1,2,3\}\},$$ where $u_1$, $u_2$ stand for the linear velocities and $u_3$ stands for the angular velocity. The desired MITL formula over the set of atomic propositions: $$\Gamma = \{\rm mission_1, mission_2, obs_1, obs_2, obs_3\},$$ that the robot needs to satisfy is set to:
\begin{align}
\hspace{-2mm} \varphi = \ & \square_{[0,\infty)} \{\neg {\rm obs_1} \wedge \neg {\rm obs_2} \wedge \neg {\rm obs_3} \wedge \neg {\rm obs_4} \} \notag \\
& \wedge \Diamond_{[30, 50]} \{\rm mission_2\} \wedge \Diamond_{[80, 110]} \{\rm mission_1\}, \label{eq:phi}
\end{align}

The prediction horizon is chosen as $N = 1.2 \sec$. The gains are set to $Q = P = R = 0.5 I_3$. 

\textbf{Video:}
A video demonstrating the experiment of this section can be found in the following link:

\hspace{12mm} { \tt \small \href{https://youtu.be/FcB8Pp5lQpw}{https://youtu.be/FcB8Pp5lQpw} }

By employing Algorithm $1$ the transition times between the RoI that the robot is following are given as follows:
\begin{align*}
\mathfrak{t}(\mathcal{R}_1, \mathcal{R}_2) &  = 17.3, \ \  \mathfrak{t}(\mathcal{R}_2, \mathcal{R}_3) = 20.1, \ \ \mathfrak{t}(\mathcal{R}_3, \mathcal{R}_4) = 18.2, \\
\mathfrak{t}(\mathcal{R}_4, \mathcal{R}_5) & = 18.5, \ \  \mathfrak{t}(\mathcal{R}_5, \mathcal{R}_1) = 15.7.
\end{align*}
The evolution of the states $x$, $y$ as well as the angle of the robot are presented in Fig. \ref{fig:state_x}, Fig. \ref{fig:state_y} and Fig. \ref{fig:angle}, respectively. The control input signals $u_1$, $u_2$ and $u_3$ are depicted in Fig. \ref{fig:control_1}, Fig. \ref{fig:control_2} and Fig. \ref{fig:control_3}, respectively. The transition times given above fulfill the constraints given by the MITL formula. Thus, it can be observed that the robot fulfills the MITL specification while all the imposed constraints from Problem \ref{problem} are satisfied.

\begin{figure}[t!]
	\centering
	\includegraphics[width = 70mm, height = 55mm]{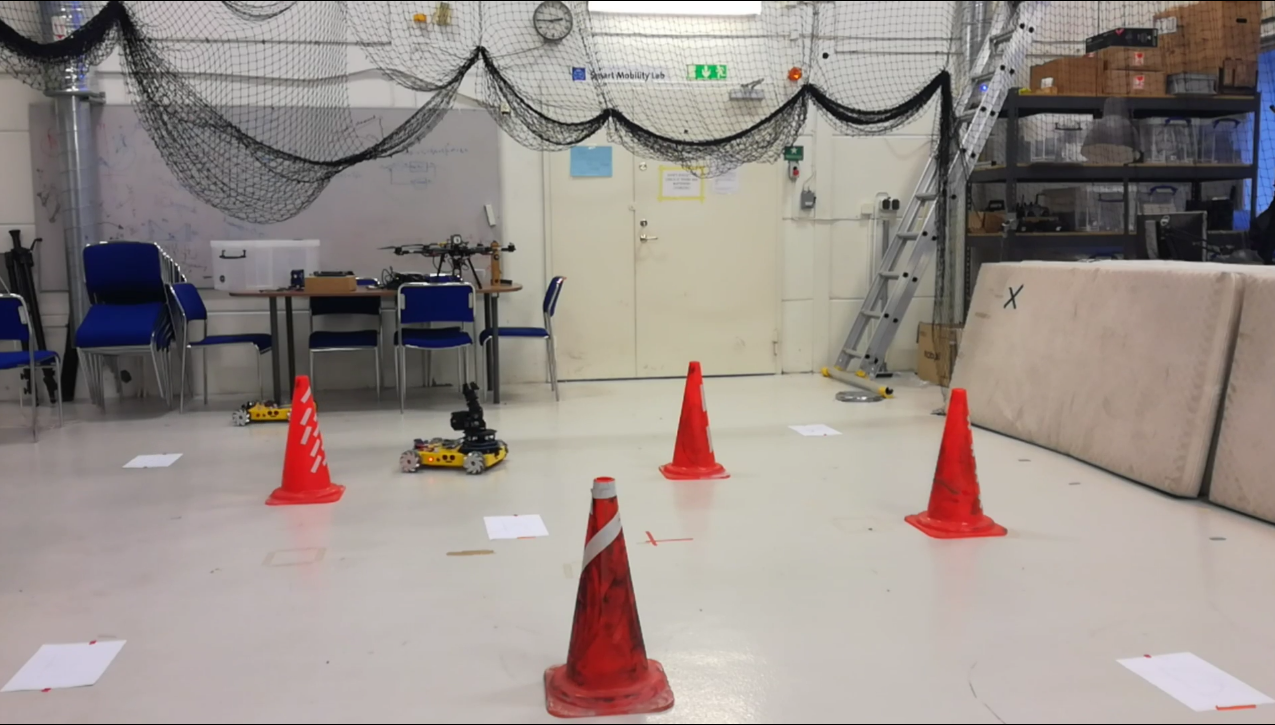}
	\caption{The nexus robot performing the desired high-level task given in \eqref{eq:phi}.}
	\label{fig:nexus_lab}
\end{figure}

\begin{figure}[t!]
	\centering
	\begin{tikzpicture}[scale = 0.85]
	\draw [color = red, fill = red!20] (-5.0, 1.7) circle (0.6cm);
	\node at (-5, 1.7) {$\mathcal{R}_6$};
	
	\draw [color = red, fill = red!20] (-2.0, 3.0) circle (0.6cm);
	\node at (-2, 3.0) {$\mathcal{R}_7$};
	
	\draw [color = red, fill = red!20] (-3.0, -1.0) circle (0.6cm);
	\node at (-3, -1.0) {$\mathcal{R}_8$};
	
	\draw [color = red, fill = red!20] (0.0, 1.0) circle (0.6cm);
	\node at (0.0, 1.0) {$\mathcal{R}_9$};
	
	\draw [color = blue, fill = blue!20] (-6.4, -1.5) circle (0.6cm);
	\node at (-6.4, -1.5) {\small $\mathcal{R}_1$};
	
	\draw [color = blue, fill = blue!20] (-6.0, 3.5) circle (0.6cm);
	\node at (-6.0, 3.5) {\small $\mathcal{R}_2$};
	
	\draw [color = blue, fill = blue!20] (-3.0, 1.5) circle (0.6cm);
	\node at (-3.0, 1.5) {$\mathcal{R}_3$};
	
	\draw [color = blue, fill = blue!20] (0.0, 3.6) circle (0.6cm);
	\node at (0.0, 3.6) {$\mathcal{R}_4$};
	
	\draw [color = blue, fill = blue!20] (0.0, -1.3) circle (0.6cm);			
	\node at (0.0, -1.3) {\small $\mathcal{R}_5$};
	
	\draw[draw = black] (-7.2, -2.5) rectangle ++(7.9, 6.9);
	\end{tikzpicture}
	\caption{A panoramic view of the workspace with the $9$ RoI.}
	\label{fig:workspace}
\end{figure}
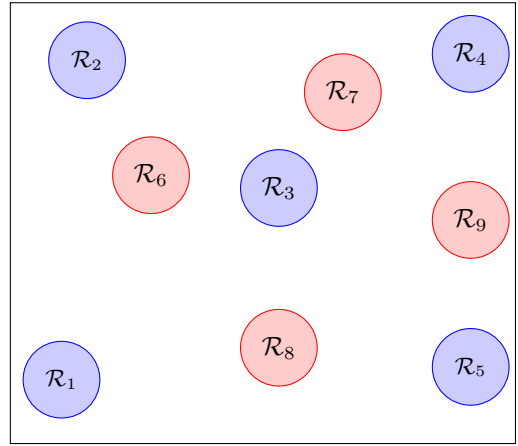

\begin{figure}[t!]
\centering
\includegraphics[scale = 0.40]{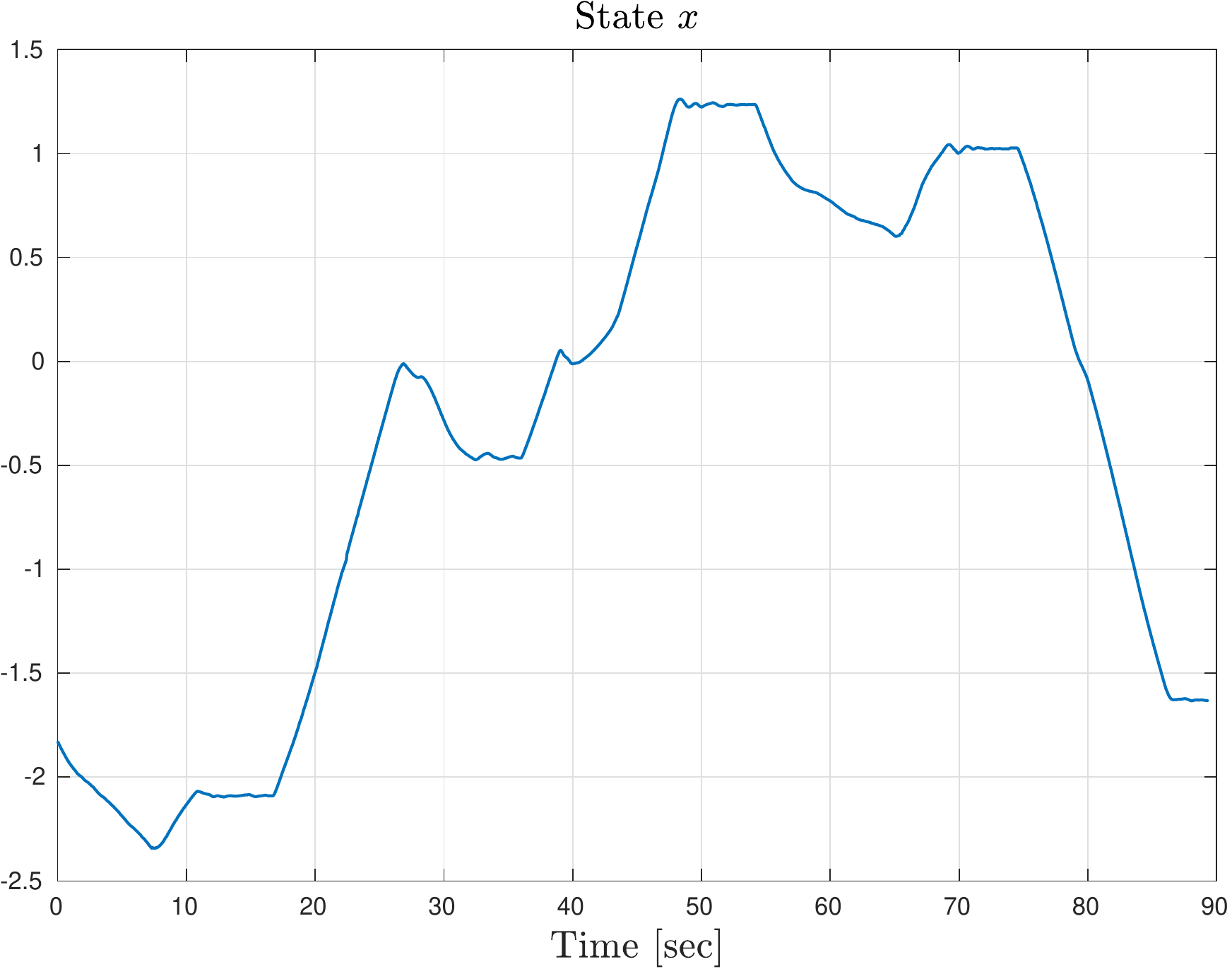}
\caption{The state $x$ of the robot.}
\label{fig:state_x}
\end{figure}

\begin{figure}[t!]
\centering
\includegraphics[scale = 0.40]{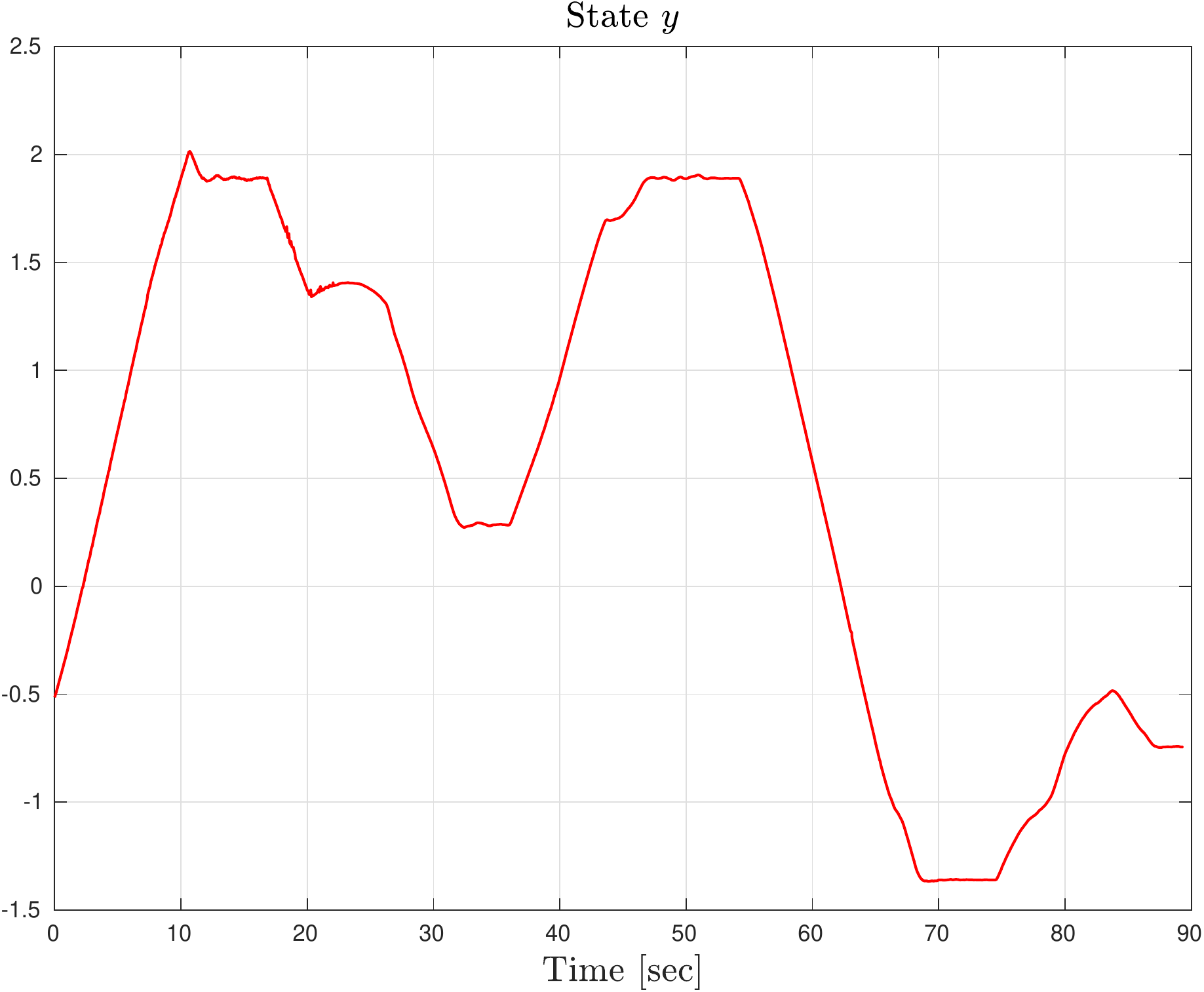}
\caption{The evolution of the state $y$ of the robot over time.}
\label{fig:state_y}
\end{figure}

\begin{figure}[t!]
\centering
\includegraphics[scale = 0.45]{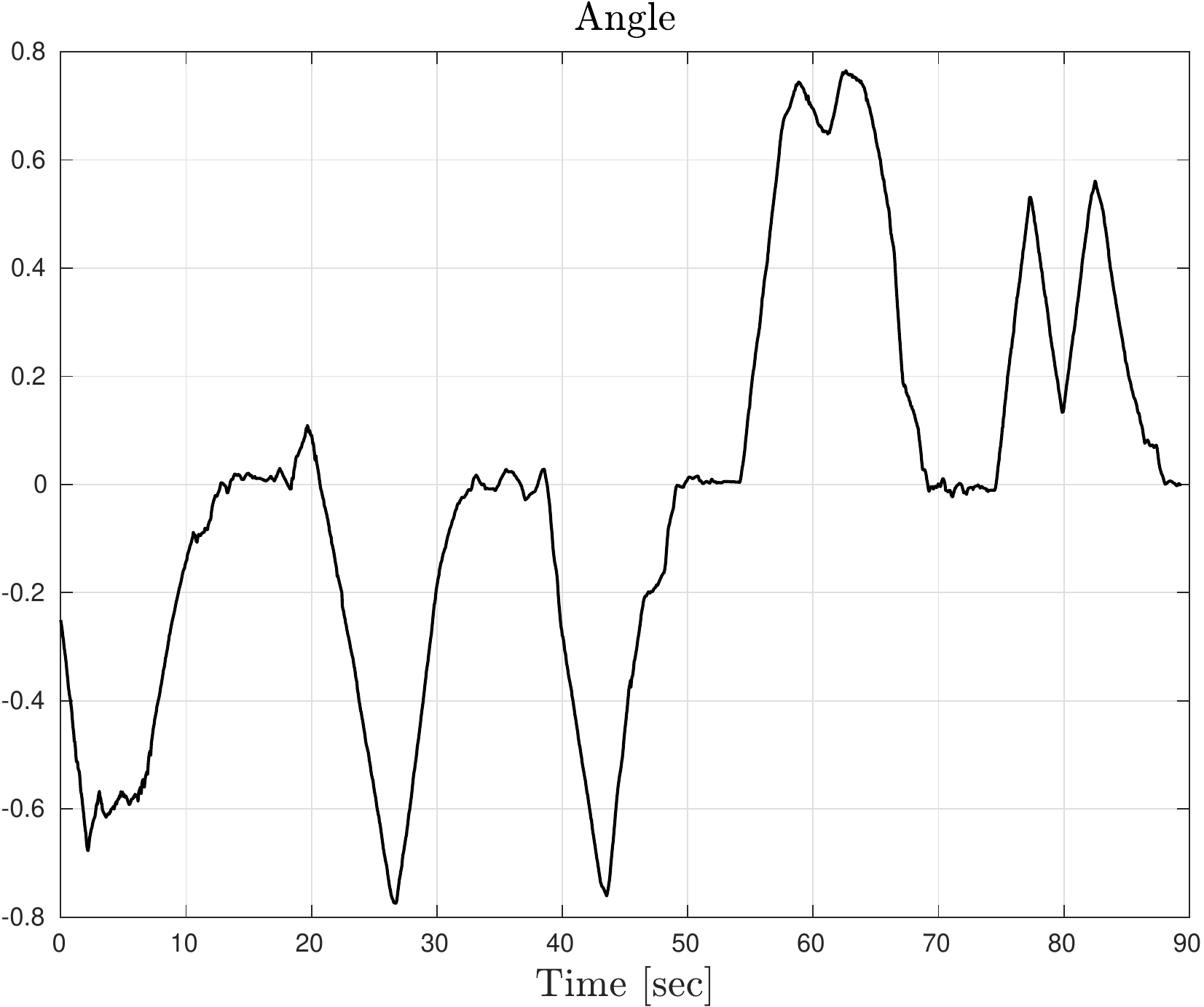}
\caption{The evolution of the angle of the robot over time.}
\label{fig:angle}
\end{figure}

\begin{figure}[t!]
\centering
\includegraphics[scale = 0.40]{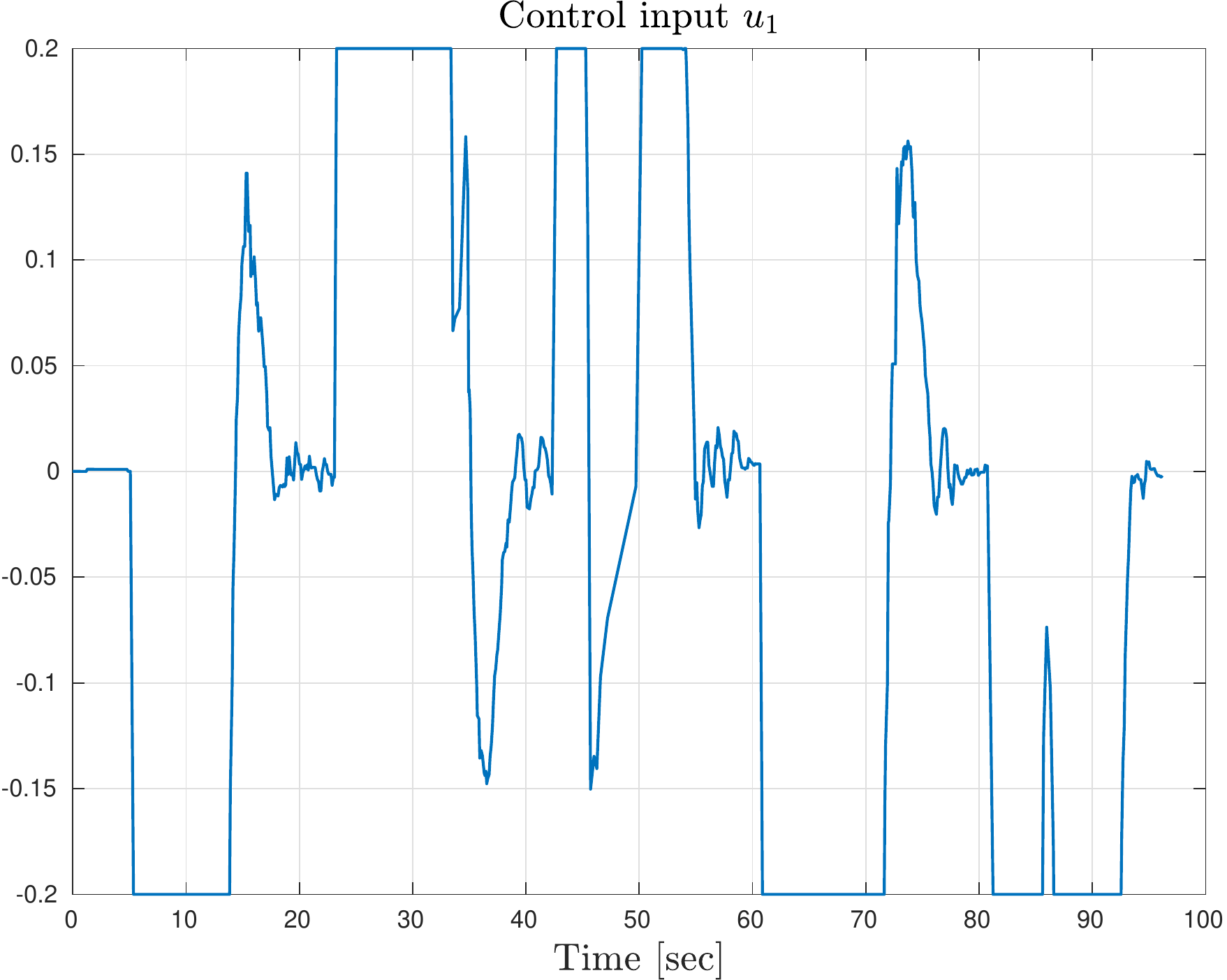}
\caption{The control input signal $u_1$ of the robot.}
\label{fig:control_1}
\end{figure}

\begin{figure}[t!]
\centering
\includegraphics[scale = 0.40]{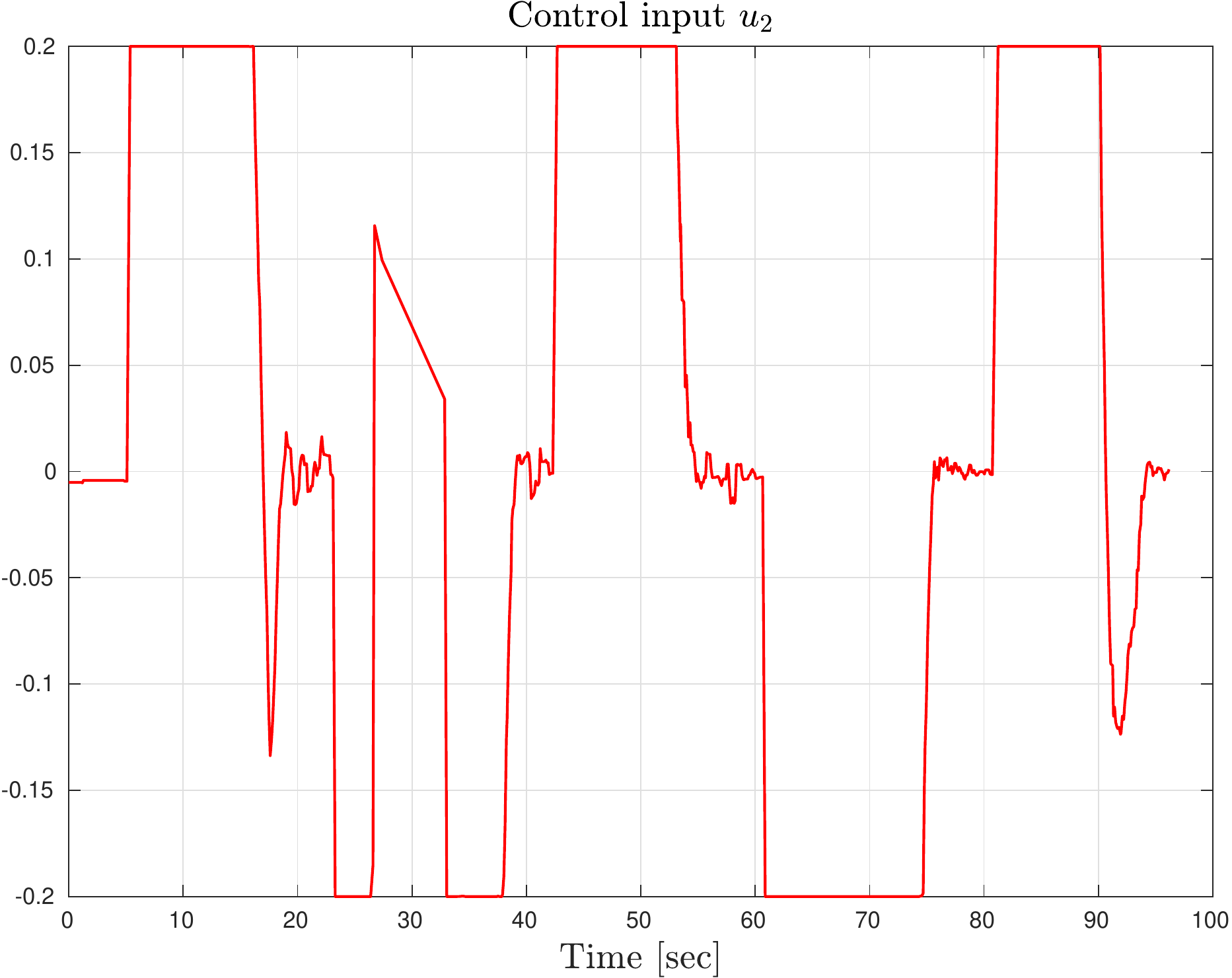}
\caption{The evolution of the control input signal $u_2$ of the robot.}
\label{fig:control_2}
\end{figure}

\begin{figure}[t!]
\centering
\includegraphics[scale = 0.45]{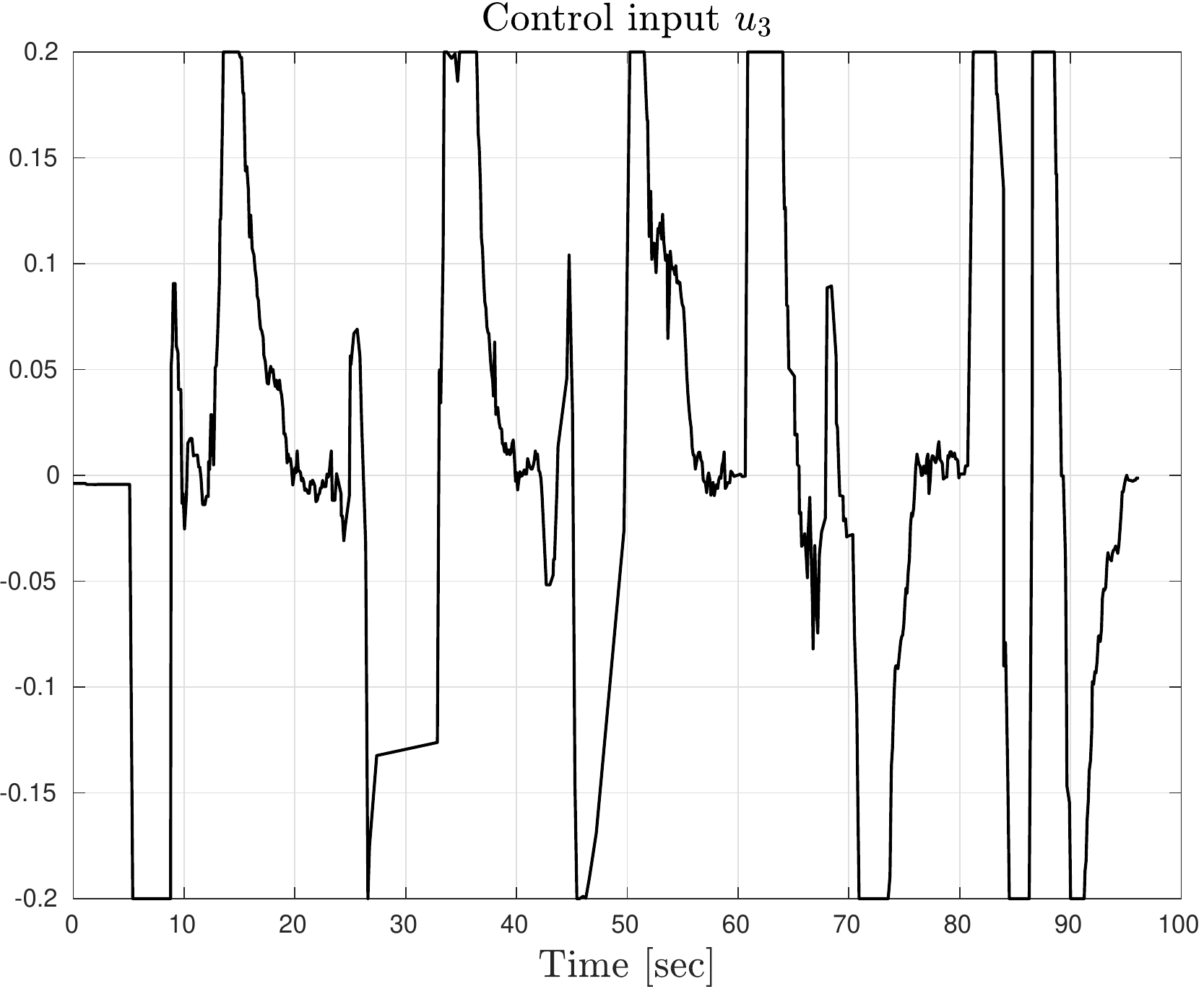}
\caption{The evolution of the control input signal $u_3$ of the robot.}
\label{fig:control_3}
\end{figure}

\section{Conclusions} \label{sec:conclusions}

In this paper, we have experimentally validated recent theoretical results of robust nonlinear tube-based MPC along with timed-constrained high-level planning. In particular, given the uncertain dynamics of a robot and a timed specification written in MITL, we have provided a framework under which a sequence of control laws under which the robot satisfies the desired task. The experimental platform consists of a Nexus $10011$ mobile robot with an attached manipulator. The preliminary experimental results of the paper in hand verifies the efficiency of the proposed framework that solves a general category of time-constrained robot navigation problems under state/input constraints, obstacle avoidance as well as uncertainties/disturbances.

\bibliographystyle{ieeetr}        
\bibliography{references}
\end{document}